\newtheorem{corollary}[theorem]{Corollary}
\newtheorem{observation}[theorem]{Observation}
\newcommand{\norm}[1]{\left\Vert#1\right\Vert}
\newcommand{\eps}{\varepsilon}
\newcommand{\Frechet}{Fr\'echet }
		\pgfmathsetmacro \pgfmathx {abs(#1)}%
		\pgfmathsetmacro \pgfmathy {abs(#2)}%
\newcommand{\ceps}{0.2cm}
\newcommand{\tikzdefines}[0] {
    \tikzstyle{cblack}=[circle, draw, thick, solid, fill=black, scale=.15]
    \tikzstyle{cblue}=[circle, draw, solid, fill=cyan!20, scale=.4]
}
\newcommand{\corner}[5] {
    \coordinate (a1) at ($(#3)!2*\ceps!270:(#2)$);
    \coordinate (a2) at ($(a1) + ($(0,0)!\ceps!($(#3)-(#2)$)$)$);
    \coordinate (a3) at ($(#4)!2*\ceps!90:(#5)$);
    \coordinate (a4) at ($(a3)+($(0,0)!\ceps!($(#4)-(#5)$)$)$);
    \coordinate (#1_b) at (intersection of a1--a2 and a3--a4);
    \coordinate (a1) at ($(#1_b)+($(0,0)!\ceps!($(#3)-(#2)$)$)$);
    \coordinate (a2) at ($(#1_b)+($(0,0)!\ceps!($(#5)-(#4)$)$)$);
    \coordinate (#1_a) at (intersection of #2--#3 and a2--#1_b);
    \coordinate (#1_c) at (intersection of #4--#5 and a1--#1_b);
    \coordinate (#1_d) at ($(#1_a)!\ceps!270:(#1_b)$);
    \coordinate (#1_e) at ($(#1_c)!\ceps!90:(#1_b)$);
    \coordinate (a1) at ($(#3)!2*\ceps!90:(#2)$);
    \coordinate (a2) at ($(a1) + ($(0,0)!\ceps!($(#3)-(#2)$)$)$);
    \coordinate (a3) at ($(#4)!2*\ceps!270:(#5)$);
    \coordinate (a4) at ($(a3)+($(0,0)!\ceps!($(#4)-(#5)$)$)$);
    \coordinate (#1_f) at (intersection of a3--a4 and #1_d--#1_e);
    \coordinate (#1_g) at (intersection of a1--a2 and #1_d--#1_e);
    
    %\coordinate (a5) at ($(#1_d)!\ceps!270:(#1_e)$);
    %\coordinate (a6) at ($(#1_e)!\ceps!90:(#1_d)$);
    %\coordinate (#1_f) at (intersection of a3--a4 and a5--a6);
    %\coordinate (#1_g) at (intersection of a1--a2 and a5--a6);
    
    \node[cblue] (#1_ne) at (#1_e) {};
    \node[cblue] (#1_nd) at (#1_d) {};
    \node[cblue] (#1_nf) at (#1_f) {};
    \node[cblue] (#1_ng) at (#1_g) {};
    \node[cblack] () at (#1_a) {};
    \node[cblack] () at (#1_b) {};
    \node[cblack] () at (#1_c) {};
    \node[cblack] () at (#1_f) {};
    \node[cblack] () at (#1_g) {};
}
\newcommand{\cornercap}[4] {
    \coordinate (#1_a) at ($(#3)!-#4!(#2)$);
    \coordinate (#1_b) at ($(#1_a)!2*\ceps!90:(#3)$);
    \coordinate (#1_c) at ($(#1_b)!2*#4!90:(#1_a)$);
    \node[cblue] (#1_d) at ($(#1_a)!-1*\ceps!90:(#3)$) {};
    \node[cblue] () at (#1_b) {};
    \node[cblack] () at (#1_a) {};
    \node[cblack] () at (#1_b) {};
}
\newcommand{\pathborder}[2]{
    [
        create hullcoords/.code={
            \global\edef\namelist{#1}
            \foreach [count=\counter] \nodename in \namelist {
                \global\edef\numberofnodes{\counter}
                \coordinate (hullcoord\counter) at (\nodename);
            }
            \pgfmathtruncatemacro \numberofnodes {\numberofnodes - 1}
            \foreach [count=\counter] \nodenum in {\numberofnodes,...,1} {
                \pgfmathtruncatemacro \next {\counter + \numberofnodes + 1}
                \coordinate (hullcoord\next) at (hullcoord\nodenum);
            }
            \pgfmathtruncatemacro \numberofnodes {\numberofnodes * 2 + 1}
            \coordinate (hullcoord0) at (hullcoord2);
            \pgfmathtruncatemacro \lastnumber {\numberofnodes+1}
            \coordinate (hullcoord\lastnumber) at (hullcoord2);
        },
        create hullcoords
    ]
    ($(hullcoord1)!#2!90:(hullcoord0)$) node[circle] {}
    \foreach [
        evaluate=\currentnode as \prevnode using \currentnode-1,
        evaluate=\currentnode as \nextnode using \currentnode+1
    ] \currentnode in {2,...,\numberofnodes} {
        let
            \p1 = ($(hullcoord\currentnode) - (hullcoord\prevnode)$),
            \n1 = {atan3(\x1,\y1) + 90pt},
            \p2 = ($(hullcoord\nextnode) - (hullcoord\currentnode)$),
            \n2 = {atan3(\x2,\y2) + 90pt},
            \n3 = {Mod(\n2-\n1,360) - 360},
            \n4 = {cos(.5*(\n3+360))},
            \n5 = {\n1+.5*\n3}
        in
        {
            \ifdim \n3 < -180.05pt {
                -- ($(hullcoord\currentnode) + (\n5:-1/\n4*#2)$)
            } \else {
                -- ($(hullcoord\currentnode)!#2!-90:(hullcoord\prevnode)$)
                arc [start angle=\n1, delta angle=\n3, radius=#2]
            } \fi
        }
    }
}
\title{Hardness Results on Curve/Point Set Matching with \Frechet Distance}
\author{Paul Accisano$^*$ \and Alper {\"{U}ng\"{o}r}
    \thanks{Dept. of Computer \& Info. Sci. \& Eng.,
        University of Florida, {\tt \{accisano, ungor\}@cise.ufl.edu}}
}
\begin{document}
\maketitle

\begin{abstract}
    Let $P$ be a polygonal curve in $\mathbb{R}^d$ of length $n$, and $S$ be a point-set of size $k$.  We consider the problem of finding a polygonal curve $Q$ on $S$ such that all points in $S$ are visited and the \Frechet distance from $P$ is less than a given $\eps$.  We show that this problem is NP-complete, regardless of whether or not points from $S$ are allowed be visited more than once.  However, we also show that if the problem instance satisfies certain restrictions, the problem is polynomial-time solvable, and we briefly outline an algorithm that computes $Q$.
\end{abstract}

\section{Introduction}
Measuring the similarity between two geometric objects is a fundamental problem in many fields of science and engineering.  However, to perform such comparisons, a good metric is required to formalize the intuitive concept of ``similarity.''  Among the many metrics that have been considered, \Frechet distance has emerged as a popular and powerful choice, especially when the geometric objects are curves.  Shape matching with \Frechet distance has been applied in many different fields, including handwriting recognition \cite{Sriraghavendra07}, protein structure alignment \cite{Jiang08}, and vehicle tracking \cite{Brakatsoulas05}.

In this paper, we consider the basic problem of measuring the similarity of two polygonal curves.  However, in our problem, the input is only partially defined.  Instead of being given both curves, we are given only one polygonal curve $P$ as well as a point set $S$.  Our problem is to complete this partial input by constructing a polygonal curve $Q$ that best matches the given curve, under the restriction that the constructed curve's vertices are exactly $S$.

Since our metric of choice is \Frechet distance, we begin with an popular, intuitive description the concept.  The metaphor of a person walking a dog is often used, with the dog walking along one curve and its owner walking along the other.  The \Frechet distance between the two curves is the length of the smallest leash that would allow both the person and the dog to reach the end of their respective curves without ever backtracking or letting go of the leash.  If a very short leash is sufficient, then the curves are very similar.  But if a longer leash is required, then the curves are very different.

Suppose our dog owner wants to walk his dog while walking down a path in a given park.  The extremely curious and territorial dog wants to sniff and/or mark every single tree along the path, running directly from tree to tree.  Mindful of their dog's proclivities, the owner goes shopping for a leash long enough to allow the dog to have its way without pulling the owner off the path.  For a given path and set of trees, will a leash of a given length be sufficient?  More formally, given a polygonal curve $P$, a point-set $S$, and a real number $\eps > 0$, does there exist a polygonal curve on $S$ that visits every point in $S$ and has \Frechet distance less than $\eps$ from $P$?  Unfortunately for owners of territorial dogs, we show in this paper this problem is NP-complete.

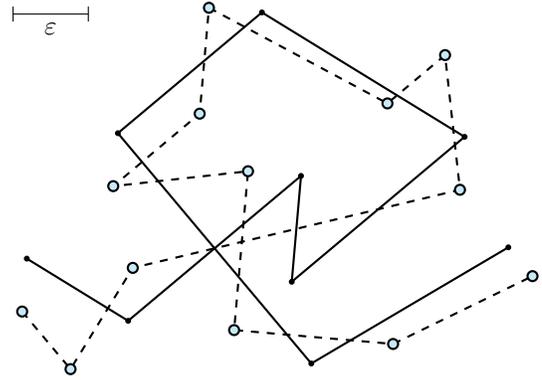
\begin{figure}
    \centering
    \begin{tikzpicture}[scale=.5]
    \tikzdefines
    \draw[thick, rotate around={40:(0,0)}] (-4,4) node[cblack] {}
        \foreach \pt in {(-3,1), (3,1), (1,-1), (7,-1), (5,5), (0,5), (0,-3), (6,-4)} {
            -- \pt node[cblack] {}
        };
    \draw[thick, rotate around={40:(0,0)}, dashed] (-5,3) node[cblue] {}
        \foreach \pt in {(-5,1), (-2,2), (6,-2), (8,1), (6,1), (4,6), (2,4), (-1,4), (2,2), (-1,-1), (2,-4), (6,-5)} {
            -- \pt node[cblue, thick] {}
        };
    
    \coordinate (a1) at (-6,7);
    \coordinate (a2) at ($(a1) + ($(0,0)!2cm!(1,0)$)$);
    \path (a1) edge node[below]{$\eps$} (a2);
    \draw ($(a1)+(0,-.2)$)--($(a1)+(0,.2)$);
    \draw ($(a2)+(0,-.2)$)--($(a2)+(0,.2)$);
\end{tikzpicture} 
    \caption{A problem instance and its solution.}
    \label{fig:example}
\end{figure}

\section{Previous Work and New Results}

The decision version of the \Frechet distance problem asks, given two geometric objects and a real number $\eps > 0$, is the \Frechet distance $\delta_F$ between the two objects less than $\eps$?  Alt and Godau \cite{Alt95} showed that, when the objects in question are polygonal curves of length $n$ and $m$, this problem can be solved in $O(nm)$ time.  They also showed that finding the exact \Frechet distance between the two curves can be done in $O(nm \log(nm))$ time.

Maheshwari et al.\ \cite{Maheshwari11} examined the following variant of the \Frechet distance problem, which we refer to as the Curve/Point Set Matching (CPSM) problem.  Given a polygonal curve $P$ of length $n$, a point set $S$ of size $k$, and a number $\eps > 0$, determine whether there exists a polygonal curve $Q$ on a subset of the points of $S$ such that $\delta_F(P, Q) \le \eps$. They gave an algorithm that decides this problem in time $O(nk^2)$.  They also showed that the curve of minimal \Frechet distance can be computed in time $O(nk^2\log(nk))$ using parametric search.

Wylie and Zhu \cite{Wylie12} also explored the CPSM problem from the perspective of discrete \Frechet distance.  In contrast to the continuous \Frechet distance, which takes into account the distance at all points along both curves, the discrete \Frechet distance only takes into account the distance at the vertices along the curves.  They formulated four versions of the CPSM problem depending on whether or not points in $S$ were allowed to be visited more than once (Unique vs. Non-unique) and whether or not $Q$ was required to visit all points in $S$ at least once (All-Points vs. Subset)  They showed that, under the discrete \Frechet distance metric, both non-unique versions were solvable in $O(nk)$ time, and both unique versions were NP-complete.

In this paper, we show that the Continuous All-Points versions of the CPSM problem, both Unique and Non-unique, are NP-complete.  Table \ref{tab:results} shows the eight versions of the problem, with our results highlighted.

\begin{table}[htbp]
    \centering
    \begin{tabular}{l@{\hspace{0.5em}}lc@{\hspace{0.3em}}lc}
        \toprule
                &               &   \multicolumn{2}{c}{Discrete} & Continuous\\
        \midrule
        Subset  & Unique        & NP-C & \cite{Wylie12}     & Open  \\
                & Non-Unique    & P    & \cite{Wylie12}     & P \cite{Maheshwari11} \\
        All-Pts & Unique        & NP-C & \cite{Wylie12}     & NP-C* \\
                & Non-Unique    & P    & \cite{Wylie12}     & NP-C* \\
        \bottomrule
    \end{tabular}%
    \caption{Eight versions of the CPSM problem and their complexity classes.  New results starred. \smallskip}
    \label{tab:results}%
\end{table}%

\section{Preliminaries}
Given two curves $P, Q : [0, 1] \rightarrow \mathbb{R}^d$, the \emph{\Frechet distance} between $P$ and $Q$ is defined as $\delta_F(P, Q) = \inf_{\sigma, \tau} \max_{t\in[0,1]} \norm{P(\sigma(t)), Q(\tau(t))}$, where $\sigma, \tau : [0, 1] \rightarrow [0, 1]$ range over all continuous non-decreasing surjective functions \cite{Ewing85}.  We make use of two commonly noted observations:

\begin{observation}
    Given four points $a$, $b$, $c$, $d \in \mathbb{R}^d$, if $\norm{ac} \le \eps$ and $\norm{bd} \le \eps$, then $\delta_F(\overrightarrow{ab}, \overrightarrow{cd}) \le \eps$.
\end{observation}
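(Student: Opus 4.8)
The plan is to exhibit an explicit pair of reparametrizations witnessing the bound, rather than argue about the infimum abstractly. Since $\delta_F(\overrightarrow{ab}, \overrightarrow{cd})$ is defined as an infimum over all continuous non-decreasing surjections $\sigma, \tau$, it suffices to pick one convenient pair and bound the resulting maximum distance from above by $\eps$.

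First I would parametrize each directed segment by constant speed on $[0,1]$, writing $\overrightarrow{ab}(t) = (1-t)\,a + t\,b$ and $\overrightarrow{cd}(t) = (1-t)\,c + t\,d$, and then take $\sigma = \tau = \mathrm{id}_{[0,1]}$, i.e.\ couple the two walks so that both traversals use the same parameter $t$. This is obviously a legal choice (the identity is continuous, non-decreasing, and surjective), so the \Frechet distance is at most $\max_{t \in [0,1]} \norm{\overrightarrow{ab}(t), \overrightarrow{cd}(t)}$.

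Next I would bound this maximum pointwise. For a fixed $t \in [0,1]$ we have $\overrightarrow{ab}(t) - \overrightarrow{cd}(t) = (1-t)(a-c) + t(b-d)$, so by the triangle inequality and $0 \le t \le 1$,
\[
    \norm{\overrightarrow{ab}(t) - \overrightarrow{cd}(t)} \le (1-t)\norm{a-c} + t\norm{b-d} \le (1-t)\eps + t\eps = \eps .
\]
Taking the maximum over $t \in [0,1]$ yields $\max_{t} \norm{\overrightarrow{ab}(t), \overrightarrow{cd}(t)} \le \eps$, and hence $\delta_F(\overrightarrow{ab}, \overrightarrow{cd}) \le \eps$, as claimed.

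There is really no substantive obstacle here: the statement is a direct consequence of the convexity of the norm (the leash length at each synchronized instant is a convex combination of the two endpoint distances). The only thing to be a little careful about is to state the parametrizations and the coupling explicitly so that the use of the definition of \Frechet distance is unambiguous; the arithmetic is then immediate.
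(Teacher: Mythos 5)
Your proof is correct and is exactly the standard argument for this fact: parametrize both segments affinely, couple them with the identity reparametrization, and use convexity of the norm to bound the synchronized distance by $\max\{\norm{ac},\norm{bd}\} \le \eps$. The paper states this as a ``commonly noted observation'' without proof, and your argument is precisely the one it implicitly relies on, so there is nothing to add or correct.
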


\begin{observation}
    Let $P_1$, $P_2$, $Q_1$, and $Q_2$ be four curves in $\mathbb{R}^d$ with $\delta_F(P_1, Q_1) \le \eps$ and $\delta_F(P_2, Q_2) \le \eps$.  If the ending point of $P_1$ (resp. $Q_1$) is the same as the starting point of $P_2$ (resp. $Q_2$) then $\delta_F(P_1 + P_2, Q_1 + Q_2) \le \eps$, where $+$ denotes concatenation.
\end{observation}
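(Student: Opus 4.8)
The plan is to argue directly from the infimum definition of \Frechet distance, by splicing together near-optimal reparametrizations for the two pairs at a single seam point. First I would fix a concrete parametrization of the concatenated curves on $[0,1]$: set $(P_1+P_2)(t)=P_1(2t)$ and $(Q_1+Q_2)(t)=Q_1(2t)$ for $t\in[0,1/2]$, and $(P_1+P_2)(t)=P_2(2t-1)$, $(Q_1+Q_2)(t)=Q_2(2t-1)$ for $t\in[1/2,1]$. These are well-defined continuous curves precisely because the endpoint of $P_1$ equals the start of $P_2$ (and likewise for $Q$), and since $\delta_F$ is invariant under reparametrizing either argument, this particular choice is harmless.

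Next, given any $\delta>0$, I would invoke the definition of the infimum to obtain continuous non-decreasing surjective maps $\sigma_i,\tau_i:[0,1]\to[0,1]$ with $\max_t\norm{P_i(\sigma_i(t)) - Q_i(\tau_i(t))}\le\eps+\delta$ for $i=1,2$. From these I build $\sigma,\tau:[0,1]\to[0,1]$ by rescaling the first pair into $[0,1/2]$ and the second into $[1/2,1]$: put $\sigma(t)=\sigma_1(2t)/2$ on $[0,1/2]$ and $\sigma(t)=1/2+\sigma_2(2t-1)/2$ on $[1/2,1]$, and similarly for $\tau$. The verification that $\sigma$ and $\tau$ are continuous (both pieces agree on the value $1/2$ at the seam $t=1/2$, since a non-decreasing surjection of $[0,1]$ sends $0\mapsto 0$ and $1\mapsto 1$), non-decreasing, and surjective onto $[0,1]$ is routine.

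Then I would check the leash bound: for $t\in[0,1/2]$ the definitions unwind to $\norm{(P_1+P_2)(\sigma(t)) - (Q_1+Q_2)(\tau(t))} = \norm{P_1(\sigma_1(2t)) - Q_1(\tau_1(2t))}\le\eps+\delta$, and symmetrically for $t\in[1/2,1]$ using the second pair. Hence $\delta_F(P_1+P_2,Q_1+Q_2)\le\eps+\delta$, and letting $\delta\to0$ yields the statement.

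The only genuine subtlety is the infimum-versus-minimum bookkeeping: one cannot assume that exactly optimal reparametrizations exist, so the slack $\delta$ must be carried through and removed in the limit at the end. The hypothesis is used in exactly one place — the common join point is what makes both the spliced curve and the spliced reparametrization continuous at the seam — so I would be careful to flag that point explicitly. (For polygonal curves the infimum is in fact attained, so one could alternatively drop $\delta$ from the outset and work with genuinely optimal matchings.)
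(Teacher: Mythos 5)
Your proof is correct. The paper states this as a ``commonly noted observation'' and offers no proof of its own, so there is nothing to compare against; your argument --- splicing near-optimal reparametrizations at the seam, using that a continuous non-decreasing surjection of $[0,1]$ fixes the endpoints, and carrying the infimum slack $\delta$ through to the limit --- is the standard one and handles the only real subtleties (continuity at the seam and infimum versus minimum) explicitly.
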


We now give a number of geometric definitions, some of which were used in \cite{Maheshwari11}.  For a given a point $p \in \mathbb{R}^d$ and a real number $\eps > 0$, let $\mathcal{B}(p, \eps) \equiv \{q \in \mathbb{R}^d : \norm{pq} \le \eps$ denote the \emph{ball} of radius $\eps$ centered at $p$, where $\norm{\cdot}$ denotes Euclidean distance.  For a line segment $L \subset \mathbb{R}^d$, let $\mathcal{C}(L, \eps) \equiv \bigcup_{p \in L} \mathcal{B}(p, \eps)$ denote the \emph{cylinder} of radius $\eps$ around $L$. Note that a necessary condition for two polygonal curves $P$ and $Q$ to have \Frechet distance less than $\eps$ is that the vertices of $Q$ must all lie within the cylinder of some segment of $P$.

Let the continuous function $P : [0, 1] \rightarrow \mathbb{R}^d$ represent a curve in $\mathbb{R}^d$.  Given two points $u, v \in P$, we use the notation $u \prec v$ if $u$ occurs before $v$ on a traversal of $P$.  The relation $\succ$ is defined analogously.  %For a subcurve $R \subseteq P$, we denote the first and last point of $R$ along $P$ as left$(R)$ and right$(R)$, respectively. \todo{Remove paragraph?}

\section{Restricted Satisfiability Problem}
Our NP-completeness result is obtained via reduction from a restricted version of the well-known 3SAT problem.  The 3SAT problem takes as input a boolean formula with clauses of size 3, and asks whether there exists an assignment to the variables that makes the formula evaluate to TRUE.  If we restrict the input to formulas in which each literal occurs exactly twice, the problem becomes the (3,B2)-SAT problem.  This may seem to be a rather extreme restriction, and, indeed, formulas of this type with less than 20 clauses are always satisfiable.  However, despite this restriction, the problem was shown to be NP-complete in \cite{Berman03}, and an example of an unsatisfiable formula with 20 clauses was presented.

In order to simplify our reduction, we make the further restriction that no two clauses have two literals in common.  In other words, we restrict the input to formulas in which the function that maps each literal to the pair of clauses it appears in is injective.  For any formula that violates this assumption, an equivalent, compliant formula can easily be constructed using the ``balanced enforcers'' described in \cite{Berman03}.  We therefore assume formulas to have this property for the remainder of the paper.

In the following sections, we first give a brief summary of the construction.  We then describe the the main gadget used, building it incrementally.  Finally, we give the full construction in detail.

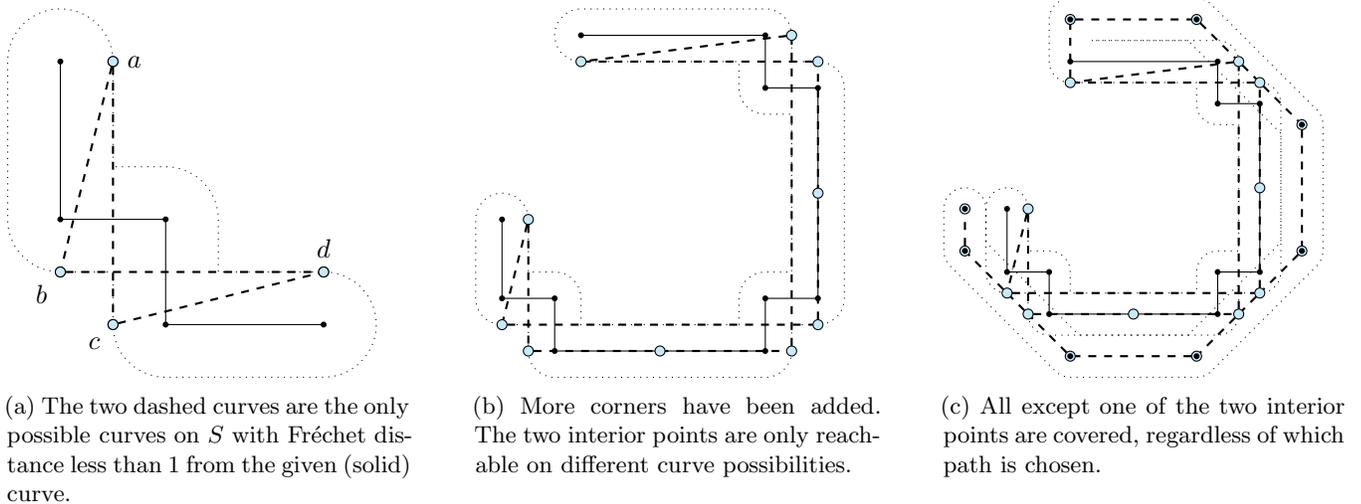
\begin{figure*}[ht]
    \begin{subfigure}[t]{0.3\textwidth}
        \begin{tikzpicture}[scale=.7,rotate=0]
    \tikzdefines

    \foreach [count=\x] \pt in {(0,5), (0,2), (2,2), (2,0), (5,0)}
        \node[cblack] (p\x) at \pt {};
    \draw (p1)--(p2)--(p3)--(p4)--(p5);
    \draw[dotted] \pathborder{p1,p2,p3,p4,p5}{1cm};

    \foreach [count=\x] \pt in {(1,5), (0,1), (1,0), (5,1)}
        \node[cblue] (s\x) at \pt {};

    \node[cblue,label=0:$a$] (s1) at (1,5) {};
    \node[cblue,label=-135:$b$] (s2) at (0,1) {};
    \node[cblue,label=-135:$c$] (s3) at (1,0) {};
    \node[cblue,label=90:$d$] (s4) at (5,1) {};

    \draw[thick, black, dashed] (s1) -- (s2) -- (s4);
    \draw[thick, black, dashed] (s1) -- (s3) -- (s4);
    %\draw[thick, red, dashed] (s1) -- (s2) -- (s3) -- (s4);
\end{tikzpicture} 
        \caption{The two dashed curves are the only possible curves on $S$ with \Frechet distance less than 1 from the given (solid) curve.}
        \label{fig:separation1}
    \end{subfigure}
    \hfill
    \begin{subfigure}[t]{0.3\textwidth}
        \begin{tikzpicture}[scale=.35]
    \tikzdefines

    \foreach [count=\x] \pt in {(0,5), (0,2), (2,2), (2,0), (10,0), (10,2), (12, 2), (12,10), (10,10), (10,12), (3,12)}
    	\node[cblack] (p\x) at \pt {};

    \draw ($(p1)$) \foreach \x in {2,...,11} {--($(p\x)$)};
    \draw[dotted] \pathborder{p1,p2,p3,p4,p5,p6,p7,p8,p9,p10,p11}{1cm};

    \foreach [count=\x] \pt in {(1,5), (0,1), (1,0), (11,0), (12,1), (12,11), (11,12), (3,11)}
    	\node[cblue] (s\x) at \pt {};

    \coordinate (i1) at ($(p4)!.5!(p5)$) {};
    \coordinate (i2) at ($(p7)!.5!(p8)$) {};

    \draw[thick, black, dashed] (s1) -- (s2) -- (s5) -- (s6) -- (s8);
    \draw[thick, black, dashed] (s1) -- (s3) -- (s4) -- (s7) -- (s8);
    \node[cblue] () at (i1) {};
    \node[cblue] () at (i2) {};

\end{tikzpicture} 
        \caption{More corners have been added.  The two interior points are only reachable on different curve possibilities.}
        \label{fig:separation2}
    \end{subfigure}
    \hfill
    \begin{subfigure}[t]{0.3\textwidth}
        \begin{tikzpicture}[scale=.28]
    \tikzdefines
    \renewcommand{\ceps}{1cm}

    \foreach [count=\x] \pt in {
            (1,5), (0,1), (1,0), (11,0), (12,1), (12,11), (11,12), (3,11)
            ,( 3,14), ( 9,14), (14, 9), (14, 3), ( 9,-2), (3, -2), (-2, 3), (-2, 5)
            }
     	\node[cblue] (s\x) at \pt {};

    \foreach [count=\x] \pt in {( 0, 5), ( 0, 2), ( 2, 2), ( 2, 0), (10, 0), (10, 2), (12, 2), (12,10), (10,10), (10,12),( 3,12)
            ,( 3,14), ( 9,14), (14, 9), (14, 3), ( 9,-2), (3, -2), (-2, 3), (-2, 5)
            }
    	\node[cblack] (p\x) at \pt {};

    %\coordinate (s0) at (0, 5);
    %\coordinate (t0) at (0, 0);
    %\coordinate (s1) at (0, 0);
    %\coordinate (t1) at (1, 0);
    %\coordinate (s2) at (12, 0);
    %\coordinate (t2) at (12, 1);
    %\coordinate (s3) at (6, 12);
    %\coordinate (t3) at (5, 12);
    %\foreach \x [remember=\x as \y (initially 0)] in {1,...,3} {
    %    \corner{a\x}{s\y}{t\y}{s\x}{t\x}
    %}
    %\cornercap{a4}{s3}{t3}{1cm}

    \begin{pgfonlayer}{background}
        \draw[dotted] \pathborder{p1,p2,p3,p4,p5,p6,p7,p8,p9,p10,p11,p12,p13,p14,p15,p16,p17,p18,p19}{1cm};
        \draw ($(p1)$) \foreach \x in {2,...,11} {--($(p\x)$)};
        \draw[thick, black, dashed] (s1)--(s2)--(s5)--(s6)--(s8);
        \draw[thick, black, dashed] (s1)--(s3)--(s4)--(s7)--(s8)--(s9)--(s10)--(s7)--(s6)--(s11)--(s12)--(s5)--(s4)--(s13)--(s14)--(s3)--(s2)--(s15)--(s16);

        %\draw[gray, dotted] \pathborder{s0,a1_a,a1_b,a1_c,a2_a,a2_b,a2_c,a3_a,a3_b,a3_c,a4_a,a4_b,a4_c,a3_f,a3_g,a2_f,a2_g,a1_f,a1_g}{\ceps};
    \end{pgfonlayer}

    \node[cblue] () at ($(p4)!.5!(p5)$) {};
    \node[cblue] () at ($(p7)!.5!(p8)$) {};

    %\node[circle, draw] () at ($(p4)!.5!(p5)$) {};
    %\node[circle, draw] () at ($(p7)!.5!(p8)$) {};

    \node[white,draw,scale=.001,circle] (b1) at (-3, -3) {};
    \node[white,draw,scale=.001,circle] (b2) at (-3, 15) {};
    \node[white,draw,scale=.001,circle] (b3) at (15, 15) {};
    \node[white,draw,scale=.001,circle] (b4) at (15, -3) {};
\end{tikzpicture} 
        \caption{All except one of the two interior points are covered, regardless of which path is chosen.
        %
%         Note that
%        \begin{tikzpicture} \tikzdefines \node[cblue] at (0,0) {}; \node[cblack] at (0,0) {}; \end{tikzpicture}
%        represents a point at which points of $P$ (
%        \begin{tikzpicture} \tikzdefines \node[cblack] at (0,0) {}; \end{tikzpicture}
%        ) and points of $S$ (
%        \begin{tikzpicture} \tikzdefines \node[cblue] at (0,0) {}; \end{tikzpicture}
%        ) coincide.}
        }
        \label{fig:separation3}
    \end{subfigure}
    \caption{The separation gadget, step by step.}
\end{figure*}

\section{The Reduction}
Let $\Phi$ be a formula given as input to the (3,B2)-SAT problem.  We construct a polygonal curve $P$ and a point set $S$ such that $\Phi$ is satisfiable if and only if there exists polygonal curve $Q$ whose vertices are exactly $S$ with \Frechet distance less than $\eps$ from $P$.

First, we construct a gadget consisting of components of $P$ and $S$ that will force any algorithm to choose between two possible polygonal path constructions.  The gadget is constructed in such a way that these two choices are the only possible polygonal paths along the gadget's component of $S$ with \Frechet distance less than $\eps$ from $P$.  These two path possibilities will correspond to TRUE and FALSE assignments for a given variable.

Then, we create a series of points in $S$ to represent the clauses in $\Phi$, one point for each clause.  For each variable, a gadget will be placed so that the pair clause points representing the clauses in which the variable's positive instances occur are only reachable along one of the two curve possibilities, and likewise for the negative instances.  Once this has been done for each variable in $\Phi$, any polygonal curve $Q$ who vertices are exactly $S$ with $\delta_F(P, Q) \le \eps$ will correspond to an assignment to the variables of $\Phi$ in which every clause is satisfied, thus making the formula evaluate to TRUE.  Furthermore, if no such curve exists, then there can be no such satisfying assignment for $\Phi$.

\subsection{Separation Gadget}
We begin the description of our main gadget with a specific example, which we will proceed to generalize.  Consider the problem instance shown in Figure \ref{fig:separation1}, with $S = \{a, b, c, d\}$.  It is clear that the answer to this problem instance is ``no''; no polygonal curve on $S$ with $\delta_F(P, Q) \le \eps$ can visit both $b$ and $c$.  However, suppose this $P$ and $S$ were part of a larger problem instance.  Suppose further that other segments of $P$ come within $\eps$ of $b$ and $c$.  The answer to the problem instance is no longer so obvious.  Even if both points cannot be reached the first time they are encountered, it is possible that whichever point was skipped could be covered in the future.
%It is also possible that these future segments of $P$ could be facing a similar dilemma, having to choose between multiple points.
This creates the fundamental difficulty that leads to our reduction.

Figure \ref{fig:separation2} shows an extension of the previous configuration, with more corners, all symmetrically the same as the first.  Note how we have not increased the number of options; there are still only two possible paths to take.  We can add as many of these corners as we like without breaking this property, as long as they all bend in the same direction.

The corner points must be placed very precisely to ensure the above properties hold.  Because their position is so constrained, using them to represent elements of $\Phi$ in our construction would be difficult.  At each corner, the two path possibilities alternate between the boundary of the cylinders and the interior.  As shown in Figure \ref{fig:separation2}, extra points in the cylinder interior are still only visible from the other interior points, and therefore we can add as many as we like without affecting the path possibilities.  Thus, by extending the segments between the corners, we can create large regions which are only reachable along one of the two possibilities.

There is still a problem to be addressed; as more and more corners are added, more and more points are created that would be skipped by the chosen path.  We would like to create a construction that forces a choice between \emph{only} the points in the cylinder interiors, and ensure all the corner points will be visited regardless of which path is chosen.  To accomplish this, after the last corner of the gadget, we can have $P$ loop back around along the outer edge, covering all the corner points without covering any of the interior points.  Figure \ref{fig:separation3} demonstrates this configuration.

Figure \ref{fig:partial} shows an example usage of the gadget.  The points in the cylinder interiors represent the clauses in which the variable appears.  Only one set of clause points, either the clauses in which the positive literals appear or the clauses in which the negative literals appear, can be reached.  However, all the corner points will always be covered. The full construction will include one of these gadgets for every variable, with each one passing through the points corresponding to clauses containing the variable's positive and negative literals.

%We can repeat this configuration multiple times in order to introduce multiple path choices.  Figure X shows a configuration which forces any polygonal path $Q$ on $S$ with $\delta_F(P, Q) \le \eps$ to choose between $O(n)$ pairs of pairs of points.  All of the corner points will be covered regardless of which pairs are chosen.  Our strategy then will be to construct a problem instance in which the pairs represent literals, and points being chosen between represent clauses.

\subsection{$\alpha$-Corners}
We now give the full specifications of the corner constructs, granting them the flexibility to bend at an arbitrary angle $\alpha$.  Each corner consists of two components of $P$ with four and three segments respectively, as well as four points of $S$.

Figure \ref{fig:corner} shows the full $\alpha$-corner construction.  The two components of $P$ are $A, B, C, D, E$, which we refer to as the \emph{forward path}, and $F, G, H, I$, which we refer to as the \emph{return path}.  The four points of $S$ are $G, H, K$, and $L$.  The line lengths are chosen so that the following properties hold:
\begin{itemize}
\item $\norm{\overline{BL}} = \norm{\overline{DK}} = \eps$
\item $\overline{AB}$ (resp. $\overline{DE}$) and the parallel line through $K$ (resp. $L$) are separated by exactly $\eps$
\item $G, H, K$, and $L$ are all collinear
\item The line through midpoints of $\overline{AJ}$ and $\overline{EF}$ passes through $C$.
\end{itemize}

The last property is enforced so that any points of $S$ in cylinders before $\overline{AB}$ are not visible to any points in cylinders after $\overline{DE}$, thus ensuring that external points cannot break the properties of the corner.  The compactness of $\alpha$-corners can be shown using simple geometry; an infinite strip along $\overline{AB}$ of thickness $14\eps$ is sufficient to contain all points of the structure for any $\alpha$.

\begin{figure}
    \centering
    \begin{tikzpicture}[scale=2.5,rotate=0]
    \tikzdefines

    \corner{a1}{-1,0}{0,0}{0,0}{1,1}

    \node[cblack, label=90:{\small $A$}] (a1_aa) at ($(a1_a) + ($(0,0)!1*((3+cos(45))/sin(45))*\ceps!(-5,0)$)$) {};
    \node[cblack, label=90:{\small $E$}] (a1_cc) at ($(a1_c) + ($(0,0)!1*((3+cos(45))/sin(45))*\ceps!(5,5)$)$) {};
    \node[cblack, label=below:{\small $J$}] (a1_gg) at ($(a1_g) + ($(0,0)!1*(2/sin(45))*\ceps!(-5,0)$)$) {};
    \node[cblack, label=below:{\small $F$}] (a1_ff) at ($(a1_f) + ($(0,0)!1*(2/sin(45))*\ceps!(5,5)$)$) {};
    \node[cblack, label={[label distance=0cm]-135:{\small $B$}}] () at (a1_a) {};
    \node[cblack, label=90:{\small $C$}] () at (a1_b) {};
    \node[cblack, label={[label distance=0cm]0:{\small $D$}}] () at (a1_c) {};
    \node[draw=none, label=below:{\small $L$}] at (a1_d) {};
    \node[draw=none, label=below:{\small $K$}] at (a1_e) {};
    \node[draw=none, label=below:{\small $G$}] at (a1_f) {};
    \node[draw=none, label={[label distance=-.1cm]-90:{\small $H$}}] at (a1_g) {};

    \begin{pgfonlayer}{background}
        \draw ($(a1_a)!.1cm!(a1_aa)$) arc (180:45:.1cm);
        \draw ($(a1_b)!.1cm!(a1_a)$) arc (-135:0:.1cm);
        \draw ($(a1_c)!.1cm!(a1_b)$) arc (180:45:.1cm);
        \draw ($(a1_ng)!.09cm!(a1_gg)$) arc (180:22.5:.09cm);
        \draw ($(a1_nf)!.09cm!(a1_ff)$) arc (45:202.5:.09cm);
        \draw ($(a1_ng)!.11cm!(a1_gg)$) arc (180:22.5:.11cm);
        \draw ($(a1_nf)!.11cm!(a1_ff)$) arc (45:202.5:.11cm);

        %\draw ($(a1_a)!.12cm!(a1_b)$) arc (45:0:.12cm);
        %\node[draw=none, label={[label distance=0.1cm]3:{\footnotesize $\alpha$}}] at (a1_a) {};
        %\draw[gray] (a1_a) -- ($(a1_a)!-.2cm!(a1_aa)$);

        %\draw[thick] ($(a1_ng)!.12cm!(a1_d)$) arc (22.5:0:.12cm);
        %\node[draw=none, label={[label distance=-.1cm]-5:{\footnotesize $\alpha/2$}}] at (a1_ng) {};
        %\draw[gray] (a1_ng) -- ($(a1_ng)!-.3cm!(a1_gg)$);

        \draw[gray] ($(a1_aa)!.5!(a1_gg)$)--(a1_b)--($(a1_cc)!.5!(a1_ff)$);
        \draw[gray] (a1_nd) -- ($(a1_cc)!-.5!(a1_ff)$);
        \draw[gray] (a1_ne) -- ($(a1_aa)!-.5!(a1_gg)$);

    \end{pgfonlayer}

    \draw[thick] (a1_aa) -- (a1_a) -- (a1_b) -- (a1_c) -- (a1_cc);
    \draw[thick] (a1_ff) -- (a1_nf) -- (a1_ne) -- (a1_nd) -- (a1_ng) -- (a1_gg);
\end{tikzpicture} 
    % $X =(3+\cos \alpha)\csc \alpha$, $Y = 2 \csc \alpha$, and $Z = (5 - \cos \alpha) \csc \alpha/2$.
    \caption{An $\alpha$-corner for $\alpha = \pi/4$, with various properties highlighted. $\angle ABC = \pi - \alpha$, $\angle JHL = \pi - \alpha/2$}
    \label{fig:corner}
\end{figure}

\begin{figure*}[ht]
    \begin{subfigure}[t]{0.45\textwidth}
        \begin{tikzpicture}[x=1cm,y=1cm,scale=.73]
    \tikzdefines
    \foreach \x in {0,...,11}
        \node[cblue] (c\x) at (\x*30:3) {};
    \node[label={left:$x$}] () at (c3) {};
    \node[label={left:$x$}] () at (c10) {};
    \node[label={above:$\overline{x}$}] () at (c7) {};
    \node[label={above:$\overline{x}$}] () at (c1) {};

    \coordinate (s0) at (0, -4);
    \coordinate (t0) at (2, -4);
    \coordinate (s1) at (c10);
    \coordinate (t1) at (c3);
    \coordinate (s2) at (-3, 4.5);
    \coordinate (t2) at (-4, 4.5);
    \coordinate (s3) at (-4.5, 1);
    \coordinate (t3) at (-4.5, 0);
    \coordinate (s4) at (c7);
    \coordinate (t4) at (c1);

    \coordinate (s5) at (4, -1);
    \coordinate (t5) at (4, 1);
    \coordinate (s6) at (c0);
    \coordinate (t6) at (c7);
    \coordinate (s7) at (-5, -3);
    \coordinate (t7) at (-5, -4);
    \coordinate (s8) at (-5, -4.5);
    \coordinate (t8) at (-3, -4.5);
    \coordinate (s9) at (c7);
    \coordinate (t9) at (c5);

    \foreach \x [remember=\x as \y (initially 0)] in {1,...,4} {
        \corner{a\x}{s\y}{t\y}{s\x}{t\x}
    }
    \cornercap{a5}{s4}{t4}{2cm}

    %\foreach \x [remember=\x as \y (initially 5)] in {6,...,9} {
    %    \corner{a\x}{s\y}{t\y}{s\x}{t\x}
    %}
    %\cornercap{a10}{s9}{t9}{4.5cm}

    \node[cblue] (sp1) at ($(s0)!\ceps!90:(t0)$) {};
    %\node[cblue] (sp2) at ($(s5)!\ceps!90:(t5)$) {};

    \begin{pgfonlayer}{background}
        \draw[thick]
            (s0) node[cblack] {}
            \foreach \x in {1,...,5} { --(a\x_a)--(a\x_b)--(a\x_c) }
            \foreach \x in {4,...,1} { --(a\x_f)--(a\x_g) }
            -- (a1_g -| s5) coordinate(f1);
            %\foreach \x in {6,...,10} { --(a\x_a)--(a\x_b)--(a\x_c) }
            %\foreach \x in {9,...,6} { --(a\x_f)--(a\x_g) };
        \draw[gray, dotted] \pathborder{s0,a1_a,a1_b,a1_c,a2_a,a2_b,a2_c,a3_a,a3_b,a3_c,a4_a,a4_b,a4_c,a5_a,a5_b,a5_c,a4_f,a4_g,a3_f,a3_g,a2_f,a2_g,a1_f,a1_g,f1}{\ceps}; %,a6_a,a6_b,a6_c,a7_a,a7_b,a7_c,a8_a,a8_b,a8_c,a9_a,a9_b,a9_c,a10_a,a10_b,a10_c,a9_f,a9_g,a8_f,a8_g,a7_f,a7_g,a6_f,a6_g}{\ceps};

        %\draw[thick, red, dashed] (sp1) -- (a1_e) -- (c10) -- (c3) -- (a2_d) -- (a3_e) -- (a4_d)
        %\draw[thick, red, dashed] (sp1) -- (a1_d) -- (a2_e) -- (a3_d) -- (a4_e) -- (c7) -- (c1)
        %     -- (a5_d) -- (a5_b)
        %     -- (a4_f) -- (a4_e) -- (a4_d) -- (a4_g)
        %     -- (a3_f) -- (a3_e) -- (a3_d) -- (a3_g)
        %     -- (a2_f) -- (a2_e) -- (a2_d) -- (a2_g)
        %     -- (a1_f) -- (a1_e) -- (a1_d) -- (a1_g)
        %     -- (f1);
    \end{pgfonlayer}

    \node[cblue] () at (f1) {};
    \node[cblack] () at (f1) {};

    %\coordinate (a1) at (-5,-4);
    %\coordinate (a2) at ($(a1) + ($(0,0)!\ceps!(1,0)$)$);
    %\path (a1) edge node[below]{$\eps$} (a2);
    %\draw ($(a1)+(0,-.05)$)--($(a1)+(0,.05)$);
    %\draw ($(a2)+(0,-.05)$)--($(a2)+(0,.05)$);

    %\node[qgre] () at (intersection of c0--c8 and q1--q2) {};

    %\node[white,draw,scale=1,circle] (b1) at (-5.15, -4.7) {};
    %\node[white,draw,scale=1,circle] (b2) at (-5.15, 5.15) {};
    %\node[white,draw,scale=1,circle] (b3) at (4.8, 5.15) {};
    %\node[white,draw,scale=1,circle] (b4) at (4.8, -4.7) {};
    %\draw[black] (b1)--(b2)--(b3)--(b4)--(b1);
\end{tikzpicture} 
        \bigskip
        \caption{A partial construction for formula with 12 clauses, showing the gadget for a single variable.  The only two valid paths visit the either positive literal's clauses or the negative literal's clauses.  }
        \label{fig:partial}
    \end{subfigure}
    \hfill
    \begin{subfigure}[t]{0.45\textwidth}
        \begin{tikzpicture}[scale=.73]
    \tikzdefines
    \node[cblue, label={[label distance=.3cm]95:$1$}] (c0) at (0*90+45:3) {};
    \node[cblue, label={[label distance=.3cm]85:$2$}] (c1) at (1*90+45:3) {};
    \node[cblue, label={[label distance=.3cm]-85:$3$}] (c2) at (2*90+45:3) {};
    \node[cblue, label={[label distance=.3cm]-95:$4$}] (c3) at (3*90+45:3) {};

    \coordinate (s0) at (0, -4);
    \coordinate (t0) at (1, -4);
    \coordinate (s1) at (c3);
    \coordinate (t1) at (c0);
    \coordinate (s2) at (2, 4.5);
    \coordinate (t2) at (0, 5.5);
    \coordinate (s3) at (0, 5.5);
    \coordinate (t3) at (-2, 4.5);
    \coordinate (s4) at (c1);
    \coordinate (t4) at (c2);

    \coordinate (s5) at (4, -2.9);
    \coordinate (t5) at (4, 1);
    \coordinate (s6) at (c0);
    \coordinate (t6) at (c1);
    \coordinate (s7) at (-4.5, 2);
    \coordinate (t7) at (-5.5, 0);
    \coordinate (s8) at (-5.5, 0);
    \coordinate (t8) at (-4.5, -2);
    \coordinate (s9) at (c2);
    \coordinate (t9) at (c3);

    \coordinate (s10) at (5, 3);
    \coordinate (t10) at (4, 4);
    \coordinate (s11) at (c0);
    \coordinate (t11) at (c2);
    \coordinate (s12) at (-3, -5);
    \coordinate (t12) at (0, -6);
    \coordinate (s13) at (0, -6);
    \coordinate (t13) at (3, -5);
    \coordinate (s14) at (c3);
    \coordinate (t14) at (c1);

    \foreach \x [remember=\x as \y (initially 0)] in {1,...,4} {
        \corner{a\x}{s\y}{t\y}{s\x}{t\x}
    }
    \cornercap{a5}{s4}{t4}{2cm}
    \foreach \x [remember=\x as \y (initially 5)] in {6,...,9} {
        \corner{a\x}{s\y}{t\y}{s\x}{t\x}
    }
    \cornercap{a10}{s9}{t9}{2.75cm}
    \foreach \x [remember=\x as \y (initially 10)] in {11,...,14} {
        \corner{a\x}{s\y}{t\y}{s\x}{t\x}
    }
    \cornercap{a15}{s14}{t14}{3cm}

    \node[cblue] (sp1) at ($(s0)!\ceps!90:(t0)$) {};
    \node[cblue] (sp2) at ($(t5)!\ceps!270:(s5)$) {};
    \node[cblue] (sp3) at ($(s10)!\ceps!90:(t10)$) {};

    \begin{pgfonlayer}{background}
        \draw[thick]
            (s0)
            \foreach \x in {1,...,5} { --(a\x_a)--(a\x_b)--(a\x_c) }
            \foreach \x in {4,...,1} { --(a\x_f)--(a\x_g) }
            -- ++(1,0) coordinate(f1) {}
            -- (s5) coordinate(f2) {}
            \foreach \x in {6,...,10} { --(a\x_a)--(a\x_b)--(a\x_c) }
            \foreach \x in {9,...,6} { --(a\x_f)--(a\x_g) }
            -- (a6_g -| s10) coordinate(f3) {}
            -- (s10)
            \foreach \x in {11,...,15} { --(a\x_a)--(a\x_b)--(a\x_c) }
            \foreach \x in {14,...,11} { --(a\x_f)--(a\x_g) }
            ;
        \draw[gray, dotted] \pathborder{s0,a1_a,a1_b,a1_c,a2_a,a2_b,a2_c,a3_a,a3_b,a3_c,a4_a,a4_b,a4_c,a5_a,a5_b,a5_c,a4_f,a4_g,a3_f,a3_g,a2_f,a2_g,a1_f,a1_g,f1,f2,a6_a,a6_b,a6_c,a7_a,a7_b,a7_c,a8_a,a8_b,a8_c,a9_a,a9_b,a9_c,a10_a,a10_b,a10_c,a9_f,a9_g,a8_f,a8_g,a7_f,a7_g,a6_f,a6_g,f3,s10,a11_a,a11_b,a11_c,a12_a,a12_b,a12_c,a13_a,a13_b,a13_c,a14_a,a14_b,a14_c,a15_a,a15_b,a15_c,a14_f,a14_g,a13_f,a13_g,a12_f,a12_g,a11_f,a11_g}{\ceps};

        %\draw[thick, red] (sp1)
        %        -- (a1_e) -- (c3) -- (c0) -- (a2_d) -- (a3_e) -- (a4_d)
        %     -- (a5_d) -- (a5_b)
        %     -- (a4_f) -- (a4_e) -- (a4_d) -- (a4_g)
        %     -- (a3_f) -- (a3_e) -- (a3_d) -- (a3_g)
        %     -- (a2_f) -- (a2_e) -- (a2_d) -- (a2_g)
        %     -- (a1_f) -- (a1_e) -- (a1_d) -- (a1_g)
        %     -- (f1) -- (f2) -- (sp2)
        %        -- (a6_e) -- (c0) -- (c1) -- (a7_d) -- (a8_e) -- (a9_d)
        %     -- (a10_d) -- (a10_b)
        %     -- (a9_f) -- (a9_e) -- (a9_d) -- (a9_g)
        %     -- (a8_f) -- (a8_e) -- (a8_d) -- (a8_g)
        %     -- (a7_f) -- (a7_e) -- (a7_d) -- (a7_g)
        %     -- (a6_f) -- (a6_e) -- (a6_d) -- (a6_g)
        %     -- (f3) -- (sp3)
        %        -- (a11_d) -- (a12_e) -- (a13_d) -- (a14_e) -- (c3) -- (c1)
        %        %-- (a11_e) -- (c0) -- (c2) -- (a12_d) -- (a13_e) -- (a14_d)
        %     -- (a15_d) -- (a15_b)
        %     -- (a14_f) -- (a14_e) -- (a14_d) -- (a14_g)
        %     -- (a13_f) -- (a13_e) -- (a13_d) -- (a13_g)
        %     -- (a12_f) -- (a12_e) -- (a12_d) -- (a12_g)
        %     -- (a11_f) -- (a11_e) -- (a11_d) -- (a11_g);
        \end{pgfonlayer}

    \node[cblue] () at (f1) {};
    \node[cblack] () at (f1) {};
    \node[cblue] () at (f2) {};
    \node[cblack] () at (f2) {};
    \node[cblue] () at (f3) {};
    \node[cblack] () at (f3) {};

    %\coordinate (a1) at (-5,-5.5);
    %\coordinate (a2) at ($(a1) + ($(0,0)!\ceps!(1,0)$)$);
    %\path (a1) edge node[below]{$\eps$} (a2);
    %\draw ($(a1)+(0,-.05)$)--($(a1)+(0,.05)$);
    %\draw ($(a2)+(0,-.05)$)--($(a2)+(0,.05)$);

    %\node[cblue] () at (intersection of c0--c8 and q1--q2) {};
\end{tikzpicture} 
        \caption{A completed construction for the formula
            $\Phi = (x \vee y \vee z) \wedge
            (\overline{x} \vee y \vee \overline{z}) \wedge
            (\overline{x} \vee \overline{y} \vee z) \wedge
            (x \vee \overline{y} \vee \overline{z})$.
            The upper right clause point represents the first clause, and the second, third, and fourth follow counterclockwise.
            }
        \label{fig:complete}
    \end{subfigure}
    \caption{Example constructions}
\end{figure*}
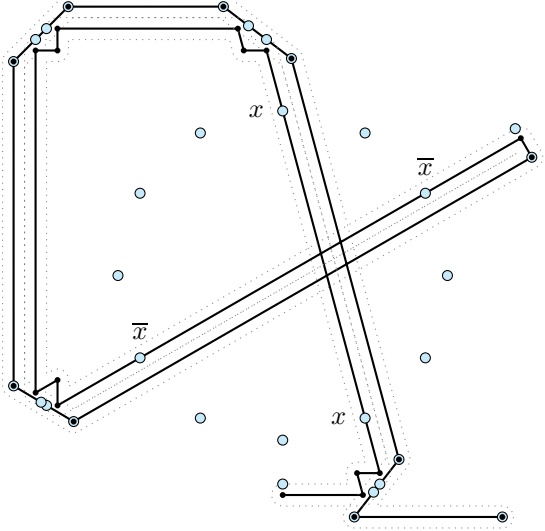
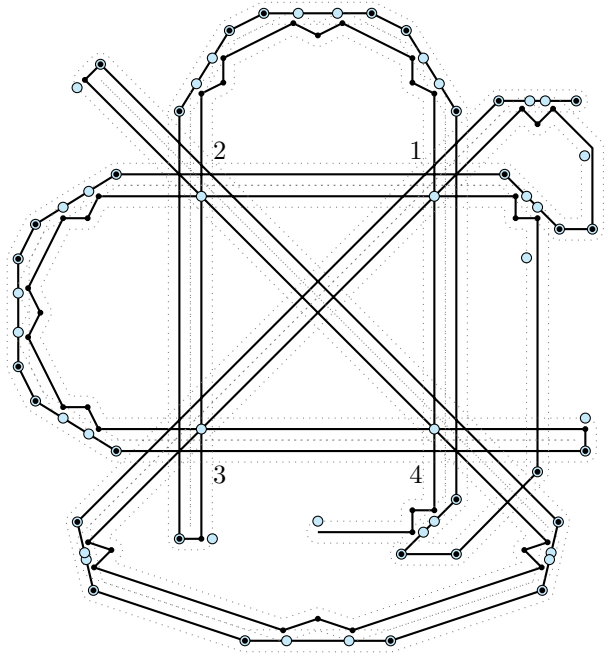

\subsection{Construction}
Using this gadget, we can now build the final construction.

We begin by adding an initial set of points to $S$ which we refer to as ``clause points'' $C_1,\dots,C_n$, one for each clause in $\Phi$.  We position these points so that they lie, equally spaced, on a circle of radius $n^2 \eps$.  This radius ensures that, if the lines $\overline{C_i C_j}$ and $\overline{C_k C_l}$ are parallel, they are separated by at a distance of least $14\eps$.  Let the $\emph{clause strip}$ about clauses $i$ and $j$ denote the set of points within $7\eps$ of the line $\overline{C_i C_j}$.

We now proceed to create $P$, adding more points to $S$ as needed.  As we place components of $P$, we require that all joints and $\alpha$-corners be placed entirely outside all strips about all clause pairs, so as not to block future pieces.  Furthermore, we require that new pieces of $P$ be placed so that their cylinders do not intersect the convex hull of all previously placed points in $S$.  This ensures that previously placed pieces do not create unintended ``shortcuts'' that could break the properties of the new pieces.  The exception to both these rules is the segment that passes directly through two clause points $i$ and $j$, which, of course, must pierce the convex hull.  Its adjacent $\alpha$-corners will lie entirely inside the strip about $\overline{C_i C_j}$, but must be placed outside all other clause strips.  Note that this is always possible; beyond $4 n^2  \eps$ units from the center of the clause ring, no clause strip intersects any other.  Strips of different angles will grow further and further apart, creating regions of arbitrary size between them.

So long as the requirements in the preceding paragraph are met, the start point of $P$ can be placed arbitrarily.  We then perform the following procedure for each variable $v_i$ in $\Phi$, building the construction incrementally.  Let $x$ and $y$ be the clauses in which the positive literals of $v_i$ occur, and $z$ and $w$ be the clauses in which the negative literals occur.  We begin by positioning an $\alpha$-corner so that the extension of the last segment of the forward path passes through $C_x$ and $C_y$.  An extra point, which we refer to as a split point, is added on the boundary of the first forward path segment to induce the splitting of the two possible paths.  From there, both the forward and return paths are extended through the clause ring, with the forward path crossing through $C_x$ and $C_y$.

On the opposite side, outside the convex hull of all points in $S$ so far, another $\alpha$-corner is added, bending the path toward $\overline{C_z C_w}$.  More $\alpha$-corners, all bending in the same direction, are added as needed until one can be placed such that the forward path passes through $C_z$ and $C_w$.  Note that there must be an odd number of $\alpha$-corners in order to ensure that $C_x$, $C_y$ and $C_w$, $C_z$ are reachable on different curve possibilities.  Once the paths have been extended through the clause ring and outside the convex hull, another split point is added on the boundary to collapse the curve possibilities.  Finally, the forward path is linked to the return path, and the joint is added to $S$.  At the end of the return path, more segments of $P$ are added, with each joint being added to $S$, in order to move to the next variable's clause strips.

Once this process has been completed for all variables, the construction is complete.  Note the units in our construction are all in terms of $\eps$, so $\eps$ can be chosen arbitrarily.  Figure \ref{fig:complete} shows a completed construction for a very simple formula, while Figure \ref{fig:partial} shows a partial construction for a more complex formula.

\section{Result}

\begin{lemma}
There exists a polygonal path $Q$ on $S$ with $\delta_F(P, Q) \le \eps$ that visits every point in $S$ if and only if $\Phi$ is satisfiable.
\end{lemma}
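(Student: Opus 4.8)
The plan is to prove the two directions separately, with essentially all of the work in the ``only if'' direction.

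For the ``if'' direction, suppose $\Phi$ has a satisfying assignment $\mathcal{A}$. I would build $Q$ by walking through the variable gadgets in the order they were laid down: for variable $v_i$, if $\mathcal{A}(v_i)=\text{TRUE}$ route $Q$ along the path possibility through $C_x$ and $C_y$ (the clauses containing the positive literals of $v_i$), and otherwise along the one through $C_z$ and $C_w$; the connecting segments between consecutive gadgets are traced in the obvious way. That this $Q$ visits all of $S$ is then checked piece by piece: every corner point is covered by the return-path loop \emph{regardless} of the choice made (this is the whole point of the construction in Figures~\ref{fig:separation1}--\ref{fig:separation3}); the two split points of each gadget lie on cylinder boundaries of the first and last forward segments and are hit by both possibilities; the joints are on $Q$ by construction; and each clause point $C_j$ is visited because clause $j$ is satisfied, hence some literal in it is true, hence the gadget of that literal's variable was routed through $C_j$. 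The bound $\delta_F(P,Q)\le\eps$ follows from our two Observations: each gadget piece and each connecting segment is matched to the corresponding portion of $P$ within $\eps$, and concatenation preserves the bound. For the Unique version one additionally notes that a clause point hit by two distinct gadgets can be arranged to be touched only once.

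For the ``only if'' direction, suppose $Q$ is a polygonal path on $S$ with $\delta_F(P,Q)\le\eps$ visiting every point of $S$. The key structural claim is that, restricted to each variable gadget's portion of $S$, $Q$ must coincide with one of the two canonical path possibilities. This is where the placement invariants do their job: the cylinder necessary condition together with the rule that each new piece of $P$ has its cylinder disjoint from the convex hull of all previously placed points forces the gadgets to be traversed in order and prevents any ``shortcut'' back to a point skipped earlier; the last $\alpha$-corner property (the line through the midpoints of $\overline{AJ}$ and $\overline{EF}$ passes through $C$) blocks visibility between cylinders on opposite sides of a corner, so external points cannot break a corner's behaviour; and the $14\eps$ separation of non-parallel clause lines keeps distinct gadgets from interfering. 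Granting this, the requirement that every corner point be visited forces each gadget's forward path to commit to exactly one of its two possibilities, TRUE (through $C_x,C_y$) or FALSE (through $C_z,C_w$), which defines an assignment $\mathcal{A}$. Finally, for each clause point $C_j$, the only segments of $P$ whose cylinders contain $C_j$ are forward-path segments of gadgets of variables whose literals occur in clause $j$; since $Q$ visits $C_j$, one of those gadgets chose the possibility through $C_j$, so clause $j$ has a literal true under $\mathcal{A}$. Hence $\Phi$ is satisfied.

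The main obstacle is the rigorous version of the structural claim in the second paragraph --- that the two canonical traversals are the \emph{only} ones of a gadget's $S$-points achieving Fréchet distance $\le\eps$, with no point skipped now and recovered later. Establishing this cleanly requires combining the per-corner alternation analysis illustrated in Figure~\ref{fig:corner}, the midpoint-collinearity property that severs cross-corner visibility, and the two global placement invariants (convex-hull avoidance and clause-strip separation), and then arguing that these properties are preserved as the construction is built incrementally. Once that is in hand, the correspondence between valid curves $Q$ and satisfying assignments of $\Phi$ is immediate and the lemma follows.
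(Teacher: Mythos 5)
Your proposal follows essentially the same approach as the paper's own proof: the forward direction builds $Q$ gadget-by-gadget from the satisfying assignment (using that corner points are covered on either possibility and clause points are covered by satisfied clauses, with the Fr\'echet bound from concatenation), and the backward direction extracts an assignment from the forced binary choice in each gadget. The structural claim you flag as the ``main obstacle'' is likewise left at the level of an appeal to the construction's placement invariants in the paper itself, so your treatment is, if anything, slightly more explicit than the original.
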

\begin{proof}
For the forward direction, assume $\Phi$ has a satisfying assignment.  It is easy to see that our construction always has a polygonal path $Q$ on $S$ with $\delta_F(P, Q) \le \eps$ that will visit every non-clause point; $\alpha$-corners are constructed specifically to ensure this.  If $\Phi$ has a satisfying assignment, then one of the two path possibilities in each variable construct will cover the the clause points corresponding to the clauses satisfied by that variable, resulting in all clause points being visited as well.

For the backward direction, let $Q$ be a complete polygonal path $Q$ on $S$ with $\delta_F(P, Q) \le \eps$.  By constructing each variable construct completely outside the convex hull of all previously placed points of $S$, we have ensured that any $Q$ with $\delta_F(P, Q) \le \eps$ must follow the path we have laid out.  Each variable construct forces a choice between two paths, representing a true or false value for that variable.  Since each $Q$ visits each clause point, the path taken in each variable construct represents an assignment to the variables that satisfies $\Phi$.
\end{proof}

It is straightforward to show that five $\alpha$-corners is sufficient to move between any two strips.  Thus, the construction is clearly of polynomial size.  This, together with the fact that the problem is in NP, leads to the final result.
\begin{theorem}
The Non-unique All-points Continuous CPSM Problem is NP-complete.
\end{theorem}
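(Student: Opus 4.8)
The plan is to discharge the two standard obligations for NP-completeness — membership in NP and NP-hardness — and to stress that the substantive work (the separation gadget, the $\alpha$-corners, and the Lemma) is already in place, so the theorem is largely an assembly step. I would structure the argument as: (i) a feasible instance admits a polynomial-size certificate that can be checked in polynomial time; (ii) the construction of the preceding sections is a polynomial-time many-one reduction from (3,B2)-SAT, whose NP-hardness is known; (iii) conclude.

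For membership in NP, the natural witness for a ``yes'' instance $(P,S,\eps)$ is the curve $Q$ itself, presented as the ordered list of points of $S$ it visits (with repetitions, the version being Non-unique); since every vertex of $Q$ is a point of $S$ and every point of $S$ is visited, consecutive list entries are consecutive vertices of $Q$, so the list faithfully encodes $Q$. The only point that is not purely mechanical is bounding the length of this list, and here I would invoke a standard exchange argument — splicing out redundant revisits, in the spirit of the reachability analysis of Maheshwari et al.~\cite{Maheshwari11}, after subdividing $P$ at its vertices and at the $O(nk)$ ball-crossing breakpoints of $P$ against $S$ — to conclude that any feasible instance admits a feasible $Q$ with a number of vertices polynomial in $n$ and $k$. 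Given such a $Q$, one verifies $\delta_F(P,Q)\le\eps$ in polynomial time with the algorithm of Alt and Godau~\cite{Alt95} and checks that all $k$ points of $S$ occur in the list; both steps are polynomial, so the problem lies in NP.

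For NP-hardness I would exhibit the reduction of the preceding sections as a polynomial-time reduction from (3,B2)-SAT, which is NP-hard by Berman et al.~\cite{Berman03} and remains so under the extra normalization (via balanced enforcers) that no two clauses share two literals. Given $\Phi$, the construction outputs $(P,S,\eps)$, and the Lemma is precisely its correctness: a complete polygonal path $Q$ on $S$ with $\delta_F(P,Q)\le\eps$ exists if and only if $\Phi$ is satisfiable. What remains is to check that the construction is polynomial, and this is exactly where the observation that five $\alpha$-corners suffice to move between any two clause strips is used: each clause contributes $O(1)$ clause points, and each variable gadget consists of $O(1)$ $\alpha$-corners — each of constant combinatorial size (four forward-path segments, three return-path segments, four points of $S$) — together with $O(1)$ connecting joints, so $|P|$ and $|S|$ are linear in $|\Phi|$. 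The coordinates stay polynomially bounded as well: the clause ring has radius $n^2\eps$, every gadget lives in a strip of width $O(\eps)$ extending $O(n^2\eps)$ from the centre, and with $\eps$ fixed to a constant the placement choices (keeping each new piece outside the running convex hull of $S$ and outside all other clause strips) are decidable and computable in polynomial time. Combining this with the Lemma shows the reduction is valid and polynomial, so the problem is NP-hard; together with membership in NP this yields the theorem.

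I do not expect a genuine obstacle in proving the theorem itself, since the difficult content lives in the Lemma and the $\alpha$-corner analysis; the only step that calls for a little care is the polynomial bound on the witness size in the NP-membership half, and even that is a routine shortcutting argument once $P$ has been subdivided at its vertices and at the ball-crossing breakpoints against $S$.
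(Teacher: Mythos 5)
Your proposal is correct and follows essentially the same route as the paper: NP-hardness via the (3,B2)-SAT reduction whose correctness is the preceding Lemma, polynomial size via the five-$\alpha$-corners observation, and membership in NP asserted with a polynomial certificate. You are in fact more careful than the paper on the NP-membership half (the paper simply asserts it), and your shortcutting argument for bounding the witness length is a reasonable way to fill that gap.
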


In the construction, the only points that occur more than once are the clause points and the inner $\alpha$-corner points.  In all occurrences of both cases, the next point is always reachable from the previous point.  Thus, for this class of problem instances, any solution to the Non-unique version of this problem can be converted to a solution to the Unique version by simply skipping the points that have already been visited.  This shows that the same reduction applies to the Unique version.

\begin{corollary}
The Unique All-points Continuous CPSM Problem is NP-complete.
\end{corollary}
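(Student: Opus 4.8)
The plan is to derive the corollary from the reduction already built for the non-unique case, by showing that on every instance $(P,S,\eps)$ produced by that reduction a \emph{unique} all-points solution exists if and only if a non-unique one does; together with the Lemma this yields ``$\Phi$ satisfiable $\iff$ $(P,S,\eps)$ is a yes-instance of the unique version,'' so the very same polynomial-time reduction witnesses hardness. First I would dispatch membership in NP: a certificate is the vertex sequence of $Q$, which in the unique version has at most $k$ vertices (each point of $S$ occurs at most once), so the certificate has polynomial size, and both $\delta_F(P,Q)\le\eps$ and the all-points condition can be checked in $O(nk)$ time by the Alt--Godau decision procedure~\cite{Alt95}.

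For hardness I would reuse the identical $P$, $S$ and $\eps$. One direction is immediate: a unique all-points path on $S$ is in particular a (non-unique) all-points path on $S$, so if the constructed instance admits a unique solution it admits a non-unique one, and the backward direction of the Lemma gives that $\Phi$ is satisfiable. For the converse, assume $\Phi$ is satisfiable; the forward direction of the Lemma produces an all-points path $Q$ with $\delta_F(P,Q)\le\eps$, but this $Q$ may revisit some points. I would repair it by scanning $Q$ left to right and, whenever the current vertex $p$ is a point of $S$ already visited, deleting that occurrence and joining its neighbours $u$ and $v$ in $Q$ by the single segment $\overline{uv}$. The claim is that each such local edit preserves $\delta_F(P,\cdot)\le\eps$: by the construction the only points that can be visited more than once are the clause points and the inner points of the $\alpha$-corners, and at each such point the stretch of $P$ against which $u,p,v$ are matched is a single line segment $L$ (the segment piercing two clause points, respectively the middle segment of a corner's forward or return path), with $u,v\in\mathcal{C}(L,\eps)$. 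Since a cylinder $\mathcal{C}(L,\eps)$ about one segment is convex, $\overline{uv}\subset\mathcal{C}(L,\eps)$, and the order of $u,v$ along $L$ is unchanged, so the two observations stated in the preliminaries still provide a monotone matching of cost $\le\eps$. Iterating removes every repeated occurrence and yields a unique all-points path $Q'$ with $\delta_F(P,Q')\le\eps$. Combined with the easy direction this gives the equivalence, and hence the corollary.

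The step I expect to be the real obstacle is the justification of this local edit: making precise the assertion that ``the next point is always reachable from the previous point'' at every repeated visit, i.e.\ that the relevant piece of $P$ really is (the cylinder of) a single segment containing both neighbours of the deleted vertex, and that removing the vertex does not force the matching to backtrack at either end. Pinning this down means returning to the geometry of the clause-strip placement and of the $\alpha$-corner --- in particular the collinearity of $G,H,K,L$ and the requirement that the line through the midpoints of $\overline{AJ}$ and $\overline{EF}$ pass through $C$, which is exactly what stops points in earlier cylinders from ``seeing'' points in later ones --- rather than anything about satisfiability. Everything else, namely NP membership and the two directions of the equivalence layered on top of the Lemma, is routine bookkeeping.
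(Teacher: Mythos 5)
Your proposal follows essentially the same route as the paper: the paper likewise observes that only clause points and inner $\alpha$-corner points can be revisited, that at every such repetition the next point is reachable from the previous one, and that a non-unique solution therefore converts to a unique one by skipping already-visited points, so the same reduction applies. Your write-up is in fact more careful than the paper's one-paragraph justification --- the convexity-of-the-cylinder argument for the local shortcut is exactly the detail the paper leaves implicit --- so there is no gap relative to what the paper itself proves.
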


\section{Restricted Problem}
The hardness of the problem stems from the fact that $P$ could come within $\eps$ of a point in $S$ multiple times.  Thus, it is natural to ask if the hardness remains if we make a restriction that prevents such a situation.  In the following section, we show that the Non-unique All-Points Continuous CPSM problem is polynomial-time solvable under the condition that, for all $s \in S$, the set $\{t \in [0, 1] \mid \norm{P(t), s} \le \eps\}$ is connected.

\subsection{Algorithm Outline}
Let $P_i$ be the $i$th segment of $P$, and let $C_i$ be the cylinder of radius $\eps$ around $P_i$.  For simplicity, let $C_0 = \mathcal{B}(P(0), \eps)$ and $C_{n+1} = \mathcal{B}(P(1), \eps)$.  Let $S_i = C_i \cap S$.  For a point $s \in S$, let $l(s)$ be the earliest occurring point of $P$ that is within $\eps$ of $s$, and let $r(s)$ be the latest.  By the restriction imposed in the preceding section, $l(s)$ and $r(s)$ are uniquely defined.

A obvious preprocessing step is to confirm that all points of $S$ are in some cylinder $C_i$.  Another is to confirm that $S_0$ and $S_{n+1}$ are nonempty.  Since instances that do not satisfy these properties can be immediately ruled out, we assume them to be true for the remainder of this section.

The $i$th and $j$th segments of $P$ are said to be \emph{connectable} via $(s, t)$, where $s \in S_i$ and a $t \in S_j$, when the following properties hold.  Note that $s$ and $t$ could be the same point.
\begin{enumerate}
    \item $\bigcup_{i < k < j} S_k - S_i - S_j = \emptyset$
    \item $\delta_F(P', \overrightarrow{st}) \le \eps$, where $P'$ is the subcurve of $P$ from $r(s)$ to $l(t)$
    \item $\forall \, v \in S_i \cup S_j$, $l(v) \preceq r(s)$ or $l(t)\preceq r(v)$
\end{enumerate}

Let $G$ be a directed graph whose vertices correspond to the segments of $P$, with an edge between any two connectable segments of increasing index.

\begin{theorem} \label{bigiff}
    There is a path in $G$ from $P_1$ to $P_n$ if and only if there exists a polygonal curve $Q$ whose vertices are exactly $S$ with $\delta_F(P, Q)$.
\end{theorem}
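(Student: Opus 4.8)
The plan is to prove the two implications of Theorem~\ref{bigiff} separately, in each case using $G$ as a bookkeeping device for a decomposition of $P$ into pieces, each matched in the \Frechet sense to a piece of the other curve via the two observations from the preliminaries. Throughout, the connectedness restriction on $\{t : \norm{P(t),s}\le\eps\}$ is the linchpin: it makes $l(s)$ and $r(s)$ meaningful and turns statements about ``two segments of $P$ being near a common point of $S$'' into statements about an interval of parameters, which is what all the coverage arguments need.

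\textbf{From a curve to a path.} Suppose $Q$ exists, with a \Frechet matching realizing $\delta_F(P,Q)\le\eps$. Since both reparametrizations are non-decreasing and surjective, $Q(0)$ is matched to $P(0)$ and $Q(1)$ to $P(1)$, so the first and last vertices of $Q$ lie in $S_1$ and $S_n$. For each vertex $q$ of $Q$ I would pick, monotonically along $Q$, a parameter $f(q)$ of $P$ matched to $q$, and let $j(q)$ be the index of the segment containing $P(f(q))$; reading $j$ along the vertices gives a weakly increasing sequence whose distinct values $J_1=1<\dots<J_M=n$ are the proposed path. To check that consecutive $J_r,J_{r+1}$ are connectable, the natural witness is $s =$ the last vertex of $Q$ with $j(s)=J_r$ and $t =$ its immediate successor, which then has $j(t)=J_{r+1}$, so that $\overrightarrow{st}$ is an actual edge of $Q$. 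Condition~(1) follows from the connectedness restriction: every point of $S$ is a vertex, hence lies in some cylinder with a connected set of near-parameters; if that interval meets a segment strictly between $P_{J_r}$ and $P_{J_{r+1}}$ as well as the segment carrying its own $j$-value, then monotonicity of the segment index along $P$ forces the interval to meet $P_{J_r}$ or $P_{J_{r+1}}$. Condition~(3) follows by splitting the vertex list of $Q$ at $s$: every vertex up to and including $s$ has $l(\cdot)\preceq f(\cdot)\preceq f(s)\preceq r(s)$, and every vertex from $t$ onward has $l(t)\preceq f(t)\preceq f(\cdot)\preceq r(\cdot)$.

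\textbf{The main obstacle} is condition~(2): $\delta_F(\overrightarrow{st},P')\le\eps$ for $P'=P|_{[r(s),l(t)]}$. Restricting the matching to the edge $\overrightarrow{st}$ gives $\delta_F(\overrightarrow{st},P|_{[f(s),f(t)]})\le\eps$, and since $f(s)\preceq r(s)$ and $l(t)\preceq f(t)$, provided $r(s)\preceq l(t)$ the curve $P'$ is a subcurve of $P|_{[f(s),f(t)]}$. Its endpoints $P(r(s))$ and $P(l(t))$ are within $\eps$ of $s$ and $t$ respectively (by definition of $r$ and $l$) and also within $\eps$ of the points of $\overrightarrow{st}$ they are matched to, so convexity of balls lets me pad the restricted matching to cover all of $\overrightarrow{st}$ rather than a subsegment. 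The delicate step is establishing $r(s)\preceq l(t)$ — i.e.\ that, for the chosen witness, $P$ cannot make its last visit to $s$ after its first visit to $t$ — which again should come from the connectedness restriction, possibly after refining the choice of witness pair. I expect this ordering fact to be the hardest part of the whole argument.

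\textbf{From a path to a curve.} Given a path $P_1=P_{i_1}\to\dots\to P_{i_M}=P_n$ with connectability witnesses $(s_k,t_k)$, I would cut $P$ at the parameters $r(s_k)$ and $l(t_k)$, producing \emph{connector} subcurves $P|_{[r(s_k),l(t_k)]}$ and \emph{window} subcurves $W_k=P|_{[l(t_{k-1}),r(s_k)]}$, with the obvious conventions at the two ends using $S_0,S_{n+1}\ne\emptyset$. The curve $Q$ is the concatenation of a local sub-path inside each $W_k$ with the segments $\overrightarrow{s_k t_k}$: condition~(2) matches each $\overrightarrow{s_k t_k}$ to its connector, and the concatenation observation glues everything. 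Inside a window I would visit the points of $S$ assigned to that stage in the order of their nearest parameter on $W_k$; two consecutive such points are each within $\eps$ of $W_k$, and the observation on segments gives the \Frechet bound against the intervening subcurve once one checks that this subcurve is effectively a subsegment. The remaining work, and the second obstacle, is the assignment/coverage bookkeeping: condition~(1) must be used to show every point of $S$ falls in some $W_k$, and condition~(3) to assign each point to a stage whose window genuinely comes within $\eps$ of it, consistently when a point belongs to several overlapping cylinders — and to confirm that $\alpha$-corner/joint points and the endpoints are handled by this scheme.
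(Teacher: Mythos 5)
Your plan follows the paper's proof in both directions: the curve-to-path direction uses exactly the paper's device of reading segment indices off the \Frechet matching and taking the two consecutive vertices of $Q$ that straddle an index increase as the connectability witnesses (with conditions (1) and (3) argued just as the paper does, via the connectedness restriction and the before/after split at the witness edge), and the path-to-curve direction is the paper's construction of alternating the connector edges $\overrightarrow{s_i t_i}$ with a monotone visit to the points $v \in S_{a_i}$ satisfying $l(t_i)\preceq r(v)$ and $l(v)\preceq r(s_i)$, gluing everything by concatenation. The one step you flag as the hardest --- verifying condition (2), and in particular the ordering $r(s)\preceq l(t)$ --- is the one the paper dispatches in a single sentence (``immediate, as it is satisfied for all adjacent vertices of $Q$''), so your proposed padding of the restricted matching out to $P|_{[r(s),l(t)]}$ using the $\eps$-balls at the endpoints is, if anything, more careful than the published argument.
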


\begin{proof} $(\Rightarrow)$
To show the forward direction of this statement, we will construct a curve with the requisite properties under the assumption that $G$ has such a path, denoted by $1 = a_1 < a_2 < \dots < a_m = n$.  Let $t_i$ and $s_i$ be the points in $S_{a_i}$ connecting $P_{a_i}$ to $P_{a_{i-1}}$ and $P_{a_{i+1}}$ respectively.  In other words, if $C_{a_{i-1}}$ and $C_{a_i}$ are connectable via $(x, y)$, and $C_{a_i}$ and $C_{a_{i+1}}$ are connectable via $(z, w)$, then $t_i = y$ and $s_i = z$.  Let $t_1$ (resp. $s_m$) be an arbitrary point in $S_0$ (resp. $S_{n+1}$); these points will be the first and last points of $Q$.

Repeat the following for each $i$ from 1 to $m$.  Add $t_i$ to $Q$.  Then, visit every point in the set $\{v \in S_{a_i} \mid l(t_i) \preceq r(v)$ and $l(v) \preceq r(s_i)\} - \{s_i, t_i\}$, in order monotonic along the direction of $P_{a_i}$.  Finally, add $s_i$ to $Q$.  Once this process has been performed for each $i$, the path is complete.

We must show now that the constructed curve (i) visits every point in $S$ and (ii) has \Frechet distance at most $\eps$ from $P$.

We prove (i) first.  For a given iteration $i$ of the above process, assume some point in $v \in S_{a_i}$ where \emph{not} did not satisfy the criteria $l(t_i) \preceq r(v)$ and $l(v) \preceq r(s_i)$.  Without loss of generality, assume $v$ did not satisfy the former of the two conditions.  Then by Property 3, $l(v) \preceq r(s_{i-1})$.  If $l(t_{i-1}) \preceq r(v)$ as well, then $v$ would have been added to $Q$ in the $i-1$ iteration.  If not, then $l(v) \preceq r(s_{i-2})$, again by Property 3.  Since $l(t_1) \preceq r(v)$ for all $v \in S$, it follows that $v$ must have been added to $Q$ during some previous iteration.  An identical argument can be used to show that, if $v$ had failed the latter condition, $v$ would definitely be added during some future iteration.  Thus, the final constructed curve contains every point in $\bigcup_{i \in [1, m]} S_{a_i}$, which by Property 1 above, equals $S$.

Now we prove (ii).  To accomplish this, we need only show that $\delta_F(P', Q') \le \eps$, where $P'$ the subcurve of $P$ from $l(t_i)$ to $r(s_i)$ and $Q'$ is the subcurve of $Q$ consisting of all points added during the $i$th iteration.  Combining this result with Property 2 will prove the statement by concatenation.

Fix an iteration $i$, and let the sequence $t_i = v_1, v_2, \dots, v_m = s_i$ denote the sequence of points added during the $i$th iteration.  For $j \in [2, m-1]$, let $p_j$ denote the point on the segment $P' \cap P_{a_i}$ closest to $v_j$.  Let $p_1 = l(t_i)$ and $p_m = r(s_i)$.  It follows from Property 3 and from the fact all points other than $t_i$ and $s_i$ are added to $Q$ in order monotonic along the direction of $P_{a_i}$ that $p_1 \preceq p_2 \preceq \dots \preceq p_m$.  Thus, the concatenation of $p_1$ through $p_m$ is exactly $P'$.  Since $\norm{v_j, p_j} \le \eps$ for all $j$, either by definition or by virtue of being in $C_{a_i}$, it follows by concatenation that $\delta_F(P', Q') \le \eps$.
\end{proof}

\begin{proof} $(\Leftarrow)$
We now prove the backward direction of Theorem \ref{bigiff}.  Assume there exists a polygonal curve $Q$ whose vertices are exactly $S$ with $\delta_F(P, Q) \le \eps$.  Then there exists a sequence of edges in $Q$ that correspond to a path in $G$ from $P_1$ to $P_n$.

Let $q_1, \dots, q_m$ be the sequence of vertices of $Q$.  From the definition of \Frechet distance, there exist continuous non-decreasing surjective functions $\sigma$ and $\tau$ such that $\max_{t\in[0,1]} \norm{P(\sigma(t)), Q(\tau(t))} \le \eps$.  Let $t_i$ be the minimum value such that $Q(\tau(t_i)) = q_i$.  As a special case, we impose that $t_m = 1$.  Then, let $a_i$ be the index of the segment on which $P(\sigma(t_i))$ lies.  Note that the sequence $A = a_1, \dots, a_m$ is nondecreasing. We claim that, if $a_i < a_{i+1}$, then $P_{a_i}$ and $P_{a_{i+1}}$ are connectable via $(q_i, q_{i+1})$. Thus, the maximal strictly increasing subsequence of $A$ corresponds to a path in $G$.

To show this, let $i$ be such that $a_i < a_{i+1}$  Since $P(\sigma(t_i)) \in P_{a_i}$ and $Q(\tau(t_i)) = q_i$ are within $\eps$ of each other, it must be that $q_i \in S_{a_i}$, and likewise for $i-1$.  We now show that the three properties of connectable segments are satisfied.  The second property is immediate, as it is satisfied for all adjacent vertices of $Q$.

For the first property, assume the set $\bigcup_{a_{i} < k < a_{i+1}} S_k - S_{a_{i}} - S_{a_{i+1}}$ were nonempty, and let $q_j$ be a point in this set.  Since $q_j$ is outside the cylinders $C_{a_i}$ and $C_{a_{i+1}}$, $q_j$ must be separated by more than $\eps$ from either segment.  Furthermore, by our restriction on the input, $q_j$ cannot be in any cylinder before $C_{a_i}$ or after $C_{a_{i+1}}$.  Thus, it must be that $a_{i} < a_j < a_{i+1}$.  This contradicts the fact that $A$ is nondecreasing.

For the third property, assume there was some point $q_j \in S_i \cup S_j$ such that $r(q_i) \prec l(q_j)$ and $r(q_j) \prec l(q_{i+1})$.  The former condition implies that $q_j$ must be visited after $q_i$, and the latter condition implies that $q_j$ must be visited before $q_{i+1}$.  This is a contradiction.
\end{proof}

With this property in hand, a polynomial time algorithm follows by simply finding the connecting edges, constructing the graph $G$, and then constructing $Q$ as in the proof above if $G$ has a valid path.

\section{Conclusion and Open Problems}
We have shown that both the Unique and Non-unique versions of the All-Points Continuous CPSM problem are NP-complete.  Furthermore, for Non-unique case, we have shown that a modified version of the problem can be solved in polynomial time.

The Unique Subset Continuous version remains an open problem.  However, given that the Discrete version is NP-complete, and that the Continuous versions tend to be harder, this version is almost certainly NP-complete as well.

{%\scriptsize
\bibliographystyle{plain}
\bibliography{frechet}
}
\end{document}